 \newcounter{thm}
  \newcounter{re}
\newtheorem{lemma}{Lemma}
\newtheorem{assumption}{Assumption}
\DeclareMathOperator*{\argmax}{arg\,max}
\begin{document}
\title{Asynchronous Multi-Sensor Change-Point Detection for Seismic Tremors}



 \author{%
   \IEEEauthorblockN{Liyan Xie\IEEEauthorrefmark{1},
                      Yao Xie\IEEEauthorrefmark{1},
                     George V. Moustakides \IEEEauthorrefmark{2}\IEEEauthorrefmark{3}
                     }
   \IEEEauthorblockA{\IEEEauthorrefmark{1}%
                     School of Industrial \& Systems Engineering, Georgia Institute of Technology,
                     \{lxie49, yao.xie\}@isye.gatech.edu }
   \IEEEauthorblockA{\IEEEauthorrefmark{2}%
                     Department of Computer Science, Rutgers University, gm463@rutgers.edu
                     }
   \IEEEauthorblockA{\IEEEauthorrefmark{3}%
                     Electrical and Computer Engineering department, University of Patras, moustaki@upatras.gr }
   }

\maketitle


\begin{abstract}
THIS PAPER IS ELIGIBLE FOR THE STUDENT PAPER AWARD.  
 We consider the sequential change-point detection for asynchronous multi-sensors, where each sensor  observe a signal (due to change-point) at different times. We propose an asynchronous Subspace-CUSUM procedure based on jointly estimating the unknown signal waveform and the unknown relative delays between the sensors. Using the estimated delays, we can align signals and use the subspace to combine the multiple sensor observations. We derive the optimal drift parameter for the proposed procedure, and characterize the relationship between the expected detection delay, average run length (of false alarms), and the energy of the time-varying signal. We demonstrate the good performance of the proposed procedure using simulation and real data. We also demonstrate that the proposed procedure outperforms the well-known ``one-shot procedure'' in detecting weak and asynchronous signals. 
\end{abstract}


\section{Introduction}\label{sec:intro}

We consider detecting the emergence of a signal, which is observed at multiple sensors with {\it unknown and different delays and amplitudes}. Such problem arises frequently in sensor network monitoring, where the sensors observe the sudden occurrence of a signal at different times mostly due to propagation delays, as illustrated in Fig. \ref{fig:illustration}. The main application of interest is seismic sensors for detecting tremors \cite{lin2008surface}. Tremors are low amplitude ambient vibrations of the ground caused by man-made or atmospheric disturbances; detecting the underlying tremors will enable geophysicists to build better predictive models.  Our goal is to detect the emergence of such occurrence (change) by combining the sensor observations.
\begin{figure}[h!]
\vskip-0.3cm
\centerline{
\includegraphics[width=0.4\textwidth]{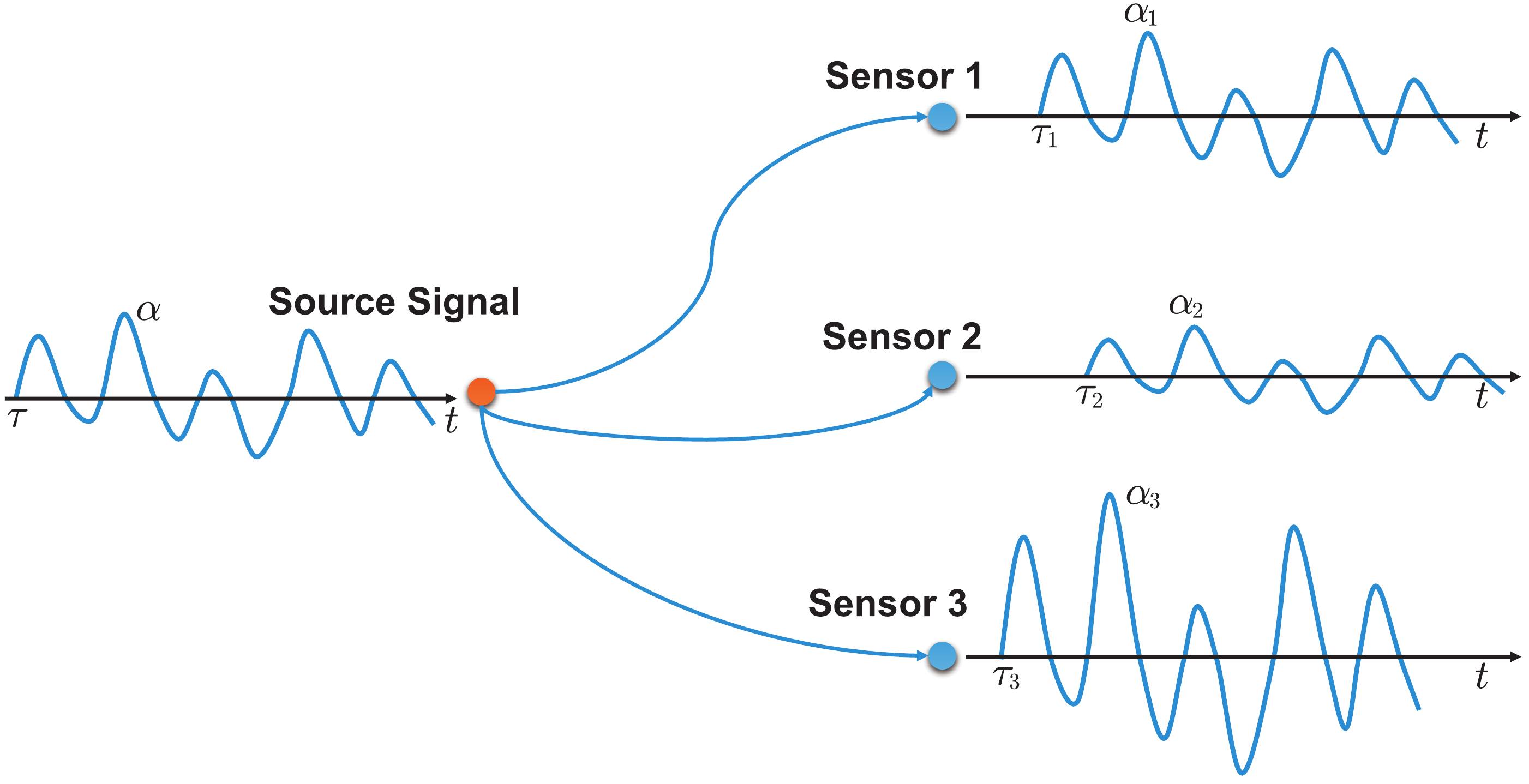}  
}
\vskip-0.3cm
\caption{Illustration of a sensor network with a signal.}
\label{fig:illustration}
\vskip-0.3cm
\end{figure}

Classical sequential change-point detection \cite{tartakovsky2014sequential} usually assumes the change happens simultaneously at all or a subset of sensors. The scenario we are proposing above, calls for the development of new methods that can consider delay estimation together with change-point detection. Change-point detection with delays has also been considered in \cite{hadjiliadis2009one} using the so-called one-shot scheme, where the fusion center declares an alarm whenever one sensor detects a change. However, the one-shot scheme relies only on local sensor information, it does not take advantage of the multi-sensor observations and the fact that the change occurs in all sensors but not at the same time. Combining asynchronous sensor observations can be beneficial for the detection problem when the relative delays between sensors are not large and when the signal is weak, as is the case of seismic tremors. This is because combining observations effectively boosts the SNR.

Even though in our work the time of change in sensors appears to be different, there exists a \textit{deterministic difference} between change-times due to the location geometry of sensors and the location where the tremor occurs. Consequently, our approach will be similar to approaches where there is a simultaneous change in all sensors after, of course, compensating for the fixed but unknown delays.  

We propose an asynchronous Subspace-CUSUM  procedure, based on jointly estimating the unknown signal waveform and the unknown relative delays. It is related to the Subspace-CUSUM procedure in our prior work \cite{xie2018first, xie2018sequential}. We extend the results therein for the asynchronous case, and develop an optimal choice  of the drift parameter, which is essential for CUSUM type of procedures. Our theoretical analysis reveals insights into the relationship between the average energy of the time-varying signal and the expected detection delay. This may potentially allow us to prove the asymptotic optimality of the asynchronous Subspace-CUSUM, by  extending the arguments in \cite{xie2018first}. We demonstrate the good performance of our procedure using simulated examples and real seismic data. Our procedure outperforms the one-shot scheme especially when the signal amplitude is weak and the relative delays are not too large.  

The rest of the paper is organized as follows. In section \ref{sec:background}, we introduce the background of sequential subspace detection and briefly summarize the Subspace-CUSUM procedure. In section \ref{sec:formulation}, we first consider the case where the delays are perfectly known and form the detection statistics. Next, we propose a detector that combines sensor synchronization with detection when the delays are unknown. The theoretical result about how to properly select the drift parameter is discussed in section \ref{sec:theoretical}. In section \ref{sec:numerical} we evaluate our method by applying it to both simulated signals and real seismic data and observe its performance under strong and weak signals.

\section{Background: Subspace-CUSUM procedure}\label{sec:background}

We first introduce the Subspace-CUSUM procedure which will be the basis of our subsequent discussion. We consider the change-point detection problem where the covariance changes from an identity matrix $\sigma^2 I_k$ to a spiked matrix $\Sigma = \sigma^2I_k + \theta uu^\intercal$, where $\theta > 0$ is the signal strength, $u \in \mathbb{R}^{k \times 1}$ represents a basis for the subspace with $\|u\|_2=1$, and where $\sigma^2$ is the noise power. We can define the SNR as $\rho = \theta/\sigma^2$. Assume the sequentially observed data are as follows
\begin{equation}
\begin{array}{ll}
x_t  \stackrel{\text{iid}}{\sim} \mathcal{N}(0,\sigma^2I_k ), &t = 1,2,\ldots,\tau, \\
x_t \stackrel{\text{iid}}{\sim} \mathcal{N}(0,  \sigma^2I_k + \theta u u^\intercal),& t = \tau+1,\tau+2, \ldots
\end{array}
\label{eq:hypothesis}
\end{equation}
where $\tau$ is the unknown change-point that we would like to detect as soon as possible. We assume that the subspace $u$ is unknown since it represents anomaly or new information. 

The well-known cumulative sum (CUSUM) test \cite{page1954continuous,moustakides1986CUSUM} cumulates the \textit{log-likelihood ratio} and declares an alarm whenever the cumulation exceeds a threshold. For the data model in \eqref{eq:hypothesis}, we can derive the log-likelihood ratio for each sample as the equation (7) in \cite{xie2018sequential}:
\[
\log\frac{f_0(x_t)}{f_\infty(x_t)} = \frac1{2\sigma^2}\frac{\rho}{1+\rho} \Big\{(u^\intercal x_t)^2- \sigma^2 \!\left(\!1+\frac{1}{\rho}\right)\!\log(1+\rho)\Big\},
\]
where $f_\infty$ and $f_0$ denote the probability density function before and after the change. Based on this, we can form the CUSUM statistic as
\begin{equation*}
S_t = (S_{t-1})^+ + (u^\intercal x_t)^2 - \sigma^2\left(1+\frac{1}{\rho}\right)\log(1+\rho),
\label{cusum_recur1}
\end{equation*}
where $(x)^+=\max\{x, 0\}$. When $u$ is unknown, in the Subspace-CUSUM procedure, $u$ is replaced with a sequential estimate $\hat{u}_t$:
\begin{equation}
\mathcal S_t = (\mathcal S_{t-1})^{+} + (\hat{u}_t^\intercal x_t)^2 - d.
\label{sscusum_update}
\end{equation}
Here, the parameter $d$ is the \textit{drift} parameter that we would like to select properly so that the increment of $\mathcal S_t$ has a negative mean under the nominal and a positive mean under the alternative probability measure. This requires
\begin{equation}\label{eq:d_interval}
\mathbb{E}_\infty[(\hat{u}_t^\intercal x_t)^2] < d < \mathbb{E}_0[(\hat{u}_t^\intercal x_t)^2],
\end{equation}
where $\mathbb{E}_\infty$ and $\mathbb{E}_0$ denote the expectation under nominal and alternative measure respectively. The Subspace-CUSUM procedure can be defined through the following stopping time
\begin{equation}
\mathcal T = \inf\{t>0: \mathcal S_t \geq b\},
\label{cusum_procedure}
\end{equation}
where $b$ is a pre-specified threshold set to control the false alarm rate. This test is known to be asymptotically optimum for the stationary case (constant $\theta$) \cite{xie2018first}.

To obtain the estimate $\hat u_t$, we form the sample covariance matrix using the observations $\{x_{t+1}, \ldots,x_{t+w}\}$ that lie in the future of $t$, 
\begin{equation}
\Sigma_t = \textstyle\sum\limits_{j = t+1}^{t+w} x_j x_j^\intercal,
\label{eq:Sigma}
\end{equation}
then the {\it unit-norm} singular vector corresponding to the largest singular value of $\Sigma_t$ can be viewed as an estimator for $u$ at time $t$. 
The usage of observations from the future is always possible by properly delaying the data. In particular, if we stop at time $\mathcal T = t$, this implies that we used data from times up to $t + w$ and, consequently, $t + w$ is the true time we stop and not $t$. The main advantage of this idea is that it provides the estimator $\hat u_t$ that is independent of $x_t$.

\section{Problem Setup} \label{sec:formulation}

In this section, we first consider the case when delays are known, and show how the Subspace-CUSUM procedure should be applied to detect the signal. When the delays are unknown, we develop a method that simultaneously synchronizes sensors by estimating their relative delays and detects the change.

Consider $k$ sensors as in Fig.\,\ref{fig:illustration}. Suppose we have sequential observations at each sensor, $x_1(t), x_2(t), \ldots,x_k(t)$. Before the emergence of the signal, the observations are noises that are assumed to be {\it i.i.d.} normal random variables. We will also assume that their powers are known which implies that, without loss of generality, we can assume that these powers are all equal to $\sigma^2$ since this is always possible with proper normalization. When a source signal $s(t)$ occurs, the observations at different sensors will capture it but with different delays and amplitudes. We assume the signal is  causal, i.e., $s(t) = 0, \forall t < 0$. Denote the time point of onset of the signal in $i$th sensor as $\tau_i$, and define the change-point as 
\[
\tau = \min_{1\leq i \leq k} \tau_i.
\]
For the $i$th sensor, this means:
\begin{equation}
\begin{array}{ll}
x_i (t) = e_i(t), &  t = 1,2,\ldots, \tau , \\
x_i(t)  = \alpha_i s(t - \tau_i) + e_i(t), & t = \tau+1, \tau +2, \ldots,
\end{array}
\label{eq:model}
\end{equation}
where $e_i(t) \stackrel{\text{iid}}{\sim} \mathcal N(0,\sigma^2)$ are random noises, and $\alpha_i$ is the {\it unknown} amplitude of the change at the $i$th sensor. Assume that the source signal $s(t)$ is {\it unknown}. We also define the relative delay between the $i$th and $j$th sensor as $\tau_{ij} = \tau_j - \tau_i$. Note that the delays with respect to the source signal are always nonnegative, but the relative delays can be either positive or negative.

\subsection{Known Delays}\label{sec:known}

To help build intuition, we first consider the ideal case where the relative delays $\tau_{ij}$ are known. Without loss of generality, we assume $\tau_{1i} \geq 0, \forall i $. If this is the case we can construct the time-shifted version of the observations
\begin{equation}
\begin{array}{ll}
\tilde x_i(t)  = x_i(t+\tau_{1i}) ,  & i = 1,2,\ldots,k,
\end{array}
\label{eq:model_synch}
\end{equation}
and combine them to form a $k$-dimensional vector
\begin{equation}\label{eq:sample}
\tilde{x}_t = \begin{bmatrix}
\tilde x_1(t) & \tilde x_2(t) & \cdots & \tilde x_k(t) 
\end{bmatrix}^\intercal.
\end{equation}
Note that when there is a signal, after the data transformation in \eqref{eq:model_synch}, we can show that the vectorized observations can be written as  
\begin{equation}
\tilde{x}_t = s(t) \begin{bmatrix}
\alpha_1 & \alpha_2 & \cdots &  \alpha_k \end{bmatrix}^\intercal + e_t, 
\,  e(t) \sim \mathcal N(0,\sigma^2 I_k). \label{model1}
 \end{equation}
Therefore the covariance structure of  $\tilde{x}_t$ will undergo a similar change as in the model in \eqref{eq:hypothesis}, namely, be the identity matrix $\sigma^2I_k$ before the emergence of the signal $s(t)$, and become the spiked model $\sigma^2I_k + \theta(t) u u^\intercal$ after. 
Here the subspace $u$, represents the unknown  {\it normalized} post-change amplitudes
\begin{equation}\label{eq:subspace}
u = \frac{\alpha}{\|\alpha\|},
\end{equation}
where, for simplicity, we denote $\alpha=[\alpha_1,\ldots,\alpha_k]^\intercal$, and the {\it time-varying} signal strength $\theta(t)$ is given by
\[
\theta(t)  = s^2(t)\| \alpha \|^2.
\]
Note that (\ref{model1}) differs from the previous model \eqref{eq:hypothesis} in that $\theta(t)$ is now \textit{time-varying}, since the signal $s(t)$ changes with time. However, the change-point detection is still to detect the transition of the covariance matrix from an identity to a spiked covariance matrix. So, we can still adapt the Subspace-CUSUM in section \ref{sec:background} to be used in this case. We should add that optimum detection schemes for time-varying models are difficult to derive. For existing results please refer to \cite{tartakovsky2014sequential}.

Since the normalized post-change amplitude vector $u$ is unknown, we can estimate it by forming the sample covariance matrix $\Sigma_t$ introduced in \eqref{eq:Sigma} but with $x_t$ replaced by $\tilde{x}_t$ in \eqref{eq:sample}.
If we apply the singular value decomposition on $\Sigma_t$ then the singular vector corresponding to the largest singular value provides the desired estimate $\hat{u}_t$. Then, we plug $\hat u_t$ back into the detection statistic \eqref{sscusum_update} and obtain
\begin{equation}\label{eq:detect_stat}
\mathcal S_t = (\mathcal S_{t-1})^{+} + (\hat{u}_t^\intercal {\tilde x_t})^2 - d,
\end{equation}
and the stopping time is defined as in \eqref{cusum_procedure}. Of course, there still remains the question of selecting the proper $d$. We defer the discussion of this issue until Section\,\ref{sec:theoretical}.

\subsection{Unknown Delays}\label{sec:unknown}
In this section, we consider the case where delays are unknown. Generally, the exact delay is not possible to obtain beforehand since it depends on the location of the tremor epicenter. Therefore, we need to come up with a method that will achieve sensor synchronization in order to apply the Subspace-CUSUM procedure. In fact, this will be performed continuously and in a sequential manner in parallel with the change-detection task. 

We select one sensor as reference, and attempt to synchronize all other sensors with respect to this sensor. Synchronization can be implemented based on the maximum likelihood approach to estimate the relative delay, on a sensor-by-sensor basis or simultaneously for all sensors. Without loss of generality, we regard the data of the first sensor $x_1(t)$ as the reference and compute the relative delays with respect to $x_1(t)$. Assume that we have available an upper bound $\tau_{\max}$ on the unknown relative delays, so they are restricted in the interval $[-\tau_{\max}, \tau_{\max}]$. 

For the $i$th sensor, the log-likelihood function of the observations $\{x_i(t+1),\ldots, x_i(t+w)\}$ after change can be written as
\begin{multline*}
\ell_{s,\tau_i}\big(x_i(t+1),\ldots,x_i(t+w) \big) = \\
  \frac{\alpha_i }{\sigma^2}\sum_{j = t+1}^{t+w} x_i(j) s(j-\tau_i) -\frac{\alpha^2_i}{2\sigma^2}\sum_{j = t+1}^{t+w}s^2(j-\tau_i)  \\
 - \frac{1}{2\sigma^2} \sum_{j = t+1}^{t+w} x^2_i(j) - \frac w2 \log(2\pi\sigma^2). 
\end{multline*}
Therefore for any given signal waveform $s(t)$, the maximum likelihood estimator of $\tau_i$ at $i$th sensor is given by 
\begin{equation}\label{eq:delayMLE}
\hat \tau_i = \argmax_{-\tau_{\max} \leq z \leq \tau_{\max} } {\left|\sum\nolimits_{j = t+1}^{t+w} x_i(j) s(j-z)\right|}.
\end{equation}
 Based on the maximum likelihood estimator \eqref{eq:delayMLE}, we propose Algorithm \ref{alg1} which performs the joint estimation of signal waveform and relative delay iteratively. 
\begin{algorithm}[h]
\caption{Joint estimate of signal waveform and delay} \label{algo:estimate}
\begin{algorithmic}[1]
\REQUIRE $\delta$, $n_{\max}$
 \STATE \textbf{Initialize:}  $n \leftarrow 1$; $ \hat s^{(1)} \leftarrow x_1 $; $\hat \tau^{(0)}_{1i} = \infty$, $\hat \tau^{(1)}_{1i} = 0, \forall i$
 \WHILE{ $\max_{i\geq2} |\hat \tau_{1i}^{(n)} - \hat \tau_{1i}^{(n-1)} | \geq \delta$ and $n\leq n_{\max}$}
 \STATE $n \leftarrow n+1$
  \FOR {$i = 2,\ldots,k$ }
\STATE $ \hat{\tau}^{(n)}_{1i}=\argmax\limits_{-\tau_{\max}\leq z \leq\tau_{\max}} | \sum\nolimits_{j=t+1}^{t+w} x_i(j)\hat s^{(n-1)}(j-z)|$ 
\ENDFOR
\STATE Form sample vector \eqref{eq:sample} using delay estimate $\hat{\tau}^{(n)}_{1i}$
\STATE Find $\hat u$, the singular vector corresponding to the largest singular value of the sample covariance matrix \eqref{eq:Sigma}
 \STATE $\hat s^{(n)}(t) \leftarrow \sum_{i=1}^k \hat u_i x_i(t+\hat \tau_{1i}^{(n)}) $
 \ENDWHILE
\end{algorithmic}
\label{alg1}
\end{algorithm}

Once we obtain the estimates of the delays we can then use \eqref{eq:model_synch} with $\tau_{1i}$ replaced by $\hat{\tau}_{1i}$ and then apply the Subspace-CUSUM as described above with $x_t$ replaced by $\tilde{x}_t$. 

\section{Theoretical analysis}\label{sec:theoretical}

Our previously introduced condition \eqref{eq:d_interval} for the drift parameter is necessary to guarantee that Subspace-CUSUM will exhibit the same performance as the regular CUSUM. In other words, the increment will have a negative mean under the nominal model resulting in multiple restarts while under the alternative regime the increment will be positive on average leading the statistic $\mathcal{S}_t$ to exceed the threshold. The aforementioned property is crucial and it ensures that the detection delay is proportional to the threshold while the expected duration between false alarms is an exponential function of the threshold. 

When the post-change statistical behavior is not stationary, \eqref{eq:d_interval} is no longer applicable and we need to consider time as an additional source of variability. Since in our problem $s(t)$ is time-varying, the expectation over $s(t)$ must be replaced with the average \textit{over time}. Specifically we have
\begin{multline}
\lim_{W\to\infty}\frac{1}{W}\sum_{j=0}^{W-1}\mathbb{E}_\infty[(\hat{u}_{t+j}^\intercal x_{t+j})^2] < d\\
 < \lim_{W\to\infty}\frac{1}{W}\sum_{j=0}^{W-1}\mathbb{E}_0[(\hat{u}_{t+j}^\intercal x_{t+j})^2].
 \label{eq:d_interval2}
\end{multline}
In other words we assume that \eqref{eq:d_interval} is valid \textit{on average} over time. Let us now see how this translates in our specific problem. We first make the following assumption.
\begin{assumption}\label{ass:average_energy}
There exists a positive constant $E_0$ such that the following limit is valid
\begin{equation}\label{eq:average_energy}
\lim_{W \rightarrow \infty} \frac1W \sum_{i=1}^{W} s^2(i) = E_0>0.
\end{equation}
Quantity $E_0$ denotes the average energy of the signal $s(t)$.
\end{assumption}
To compute the expectations of $(\hat{u}_t^\intercal \tilde{x}_t)^2$, especially under the alternative regime, it is necessary to be able to describe the statistical behavior of our estimate $\hat{u}_t$. We will assume that the window size $w$ is sufficiently large so that Central Limit Theorem type approximations are possible. Explicit formulas are given in the following Lemma.
\begin{lemma}\label{lem:drift}
If Assumption\,\ref{ass:average_energy} is true, and $e_i(t) \stackrel{{\rm iid}}{\sim} \mathcal N(0,\sigma^2)$, we have that under the pre-change regime,
\[
\mathbb{E}_\infty \left[  (\hat{u}_t^\intercal \tilde{x}_t)^2 \right] = \sigma^2, 
\]
and under the post-change regime:
\begin{multline*}\label{eq:drift0}
\mathbb{E}_0 \left[  (\hat{u}_t^\intercal \tilde{x}_t)^2 \right] 
= \sigma^2 \left[ 1 + \frac{s^2(t)\rho}{E_0}\left( 1 -  \frac{1+\rho}{w\rho^2}(k-1)  \right) \right],
\end{multline*}
where
\begin{equation*}\label{eq:rho_definition}
\rho = \frac{E_0}{\sigma^2} \| \alpha\|^2,
\end{equation*}
can be viewed as the average SNR over time.
\end{lemma}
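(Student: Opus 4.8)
The plan is to exploit the structural fact, noted in Section~\ref{sec:background}, that $\hat{u}_t$ is built from the future observations $\tilde{x}_{t+1},\dots,\tilde{x}_{t+w}$ and is therefore independent of the current sample $\tilde{x}_t$. Conditioning on $\hat{u}_t$ then reduces both expectations to elementary Gaussian moment computations. For the pre-change regime, under $\mathbb{E}_\infty$ we have $\tilde{x}_t=e_t\sim\mathcal{N}(0,\sigma^2 I_k)$, so conditionally on $\hat{u}_t$ the scalar $\hat{u}_t^\intercal\tilde{x}_t$ is zero-mean Gaussian with variance $\sigma^2\|\hat{u}_t\|_2^2=\sigma^2$; taking the outer expectation gives $\mathbb{E}_\infty[(\hat{u}_t^\intercal\tilde{x}_t)^2]=\sigma^2$ at once.

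For the post-change regime I would use the representation \eqref{model1}, $\tilde{x}_t=s(t)\alpha+e_t$ with $e_t\sim\mathcal{N}(0,\sigma^2 I_k)$ independent of $\hat{u}_t$, expand
\[
(\hat{u}_t^\intercal\tilde{x}_t)^2=s^2(t)(\hat{u}_t^\intercal\alpha)^2+2s(t)(\hat{u}_t^\intercal\alpha)(\hat{u}_t^\intercal e_t)+(\hat{u}_t^\intercal e_t)^2,
\]
and take expectations term by term. Conditioning on $\hat{u}_t$, the middle term vanishes since $e_t$ is independent of $\hat{u}_t$ and mean zero, and the last term contributes $\sigma^2$ exactly as above; writing $\alpha=\|\alpha\|u$ with $u$ as in \eqref{eq:subspace}, this leaves $\mathbb{E}_0[(\hat{u}_t^\intercal\tilde{x}_t)^2]=\sigma^2+s^2(t)\|\alpha\|^2\,\mathbb{E}_0[(\hat{u}_t^\intercal u)^2]$. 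Thus the whole problem reduces to quantifying $\mathbb{E}_0[(\hat{u}_t^\intercal u)^2]$, i.e. the quality of the subspace estimate.

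To handle $\mathbb{E}_0[(\hat{u}_t^\intercal u)^2]$ I would first show that, for large $w$, the normalized sample covariance $\frac1w\Sigma_t$ formed from $\tilde{x}_{t+1},\dots,\tilde{x}_{t+w}$ is a small perturbation of the spiked matrix $\sigma^2 I_k+\theta uu^\intercal$ with $\theta=E_0\|\alpha\|^2$: expanding $\tilde{x}_j\tilde{x}_j^\intercal=s^2(j)\alpha\alpha^\intercal+s(j)(\alpha e_j^\intercal+e_j\alpha^\intercal)+e_je_j^\intercal$ and invoking Assumption~\ref{ass:average_energy} for the term $\frac1w\sum_j s^2(j)\to E_0$, together with the law of large numbers giving $\frac1w\sum_j e_je_j^\intercal\to\sigma^2 I_k$ and $\frac1w\sum_j s(j)e_j\to 0$. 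A first-order eigenvector perturbation (CLT-type) expansion of the leading singular vector of this Wishart-type matrix around the spike then gives $1-\mathbb{E}_0[(\hat{u}_t^\intercal u)^2]\approx (k-1)\,\frac{\lambda_1\lambda_2}{w(\lambda_1-\lambda_2)^2}$ with $\lambda_1=\sigma^2+\theta$ and $\lambda_2=\sigma^2$; substituting $\theta=\sigma^2\rho$ this is $\mathbb{E}_0[(\hat{u}_t^\intercal u)^2]\approx 1-\frac{(1+\rho)(k-1)}{w\rho^2}$. Plugging this back and using $\|\alpha\|^2=\rho\sigma^2/E_0$ yields precisely the claimed formula.

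The main obstacle is the eigenvector perturbation step: the window data $\tilde{x}_{t+1},\dots,\tilde{x}_{t+w}$ are independent but \emph{not} identically distributed, because the signal level $s(j)$ varies within the window, so the standard spiked-Wishart eigenvector asymptotics must be re-derived for this heteroscedastic setting, controlling the fluctuations of $\frac1w\sum_j(s^2(j)-E_0)$ and the signal–noise cross terms, and one must verify that the neglected higher-order contributions are $o(1/w)$ so that the $O(1/w)$ correction is exact in the large-$w$ CLT limit that the lemma presupposes. Everything else is elementary Gaussian bookkeeping.
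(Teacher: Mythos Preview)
Your proposal is correct and follows essentially the same route as the paper: exploit the independence of $\hat{u}_t$ and $\tilde{x}_t$ to reduce to $\sigma^2+s^2(t)\|\alpha\|^2\,\mathbb{E}_0[(\hat{u}_t^\intercal u)^2]$, then invoke a CLT-type eigenvector perturbation result (the paper cites \cite{anderson1963asymptotic,paul2007asymptotics}) to obtain $\mathbb{E}_0[(\hat{u}_t^\intercal u)^2]\approx 1-\frac{(1+\rho)(k-1)}{w\rho^2}$. The only cosmetic difference is that the paper parameterizes the perturbation via the un-normalized eigenvector error $v_t=\omega_t-u$, writes $(\hat{u}_t^\intercal u)^2=1/(1+\|v_t\|^2)\approx 1-\|v_t\|^2$, and quotes $\mathbb{E}_0[\|v_t\|^2]=\frac{(1+\rho)(k-1)}{w\rho^2}$, whereas you state the equivalent Anderson-type formula $(k-1)\lambda_1\lambda_2/\{w(\lambda_1-\lambda_2)^2\}$ directly; your identification of the heteroscedastic-window issue is actually more explicit than the paper's treatment, which simply appeals to Assumption~\ref{ass:average_energy} to replace the window-average SNR by~$\rho$.
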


\begin{proof} The proof is given in the Appendix.
\end{proof}

Using Lemma\,\ref{lem:drift} and \eqref{eq:d_interval2} we can immediately deduce that the drift $d$ must satisfy
$$
\sigma^2<d< \sigma^2 \left[ 1 + \rho\left( 1 -  \frac{1+\rho}{w\rho^2}(k-1)  \right) \right],
$$
which is similar to the stationary condition imposed in \cite{xie2018first} but with the average energy $E_0$ replacing the constant energy of the stationary version.

\section{Numerical results}\label{sec:numerical}

\subsection{Simulation}
 
In this section, we perform simulations to compare the performance of our subspace-based test with the one-shot scheme proposed in \cite{hadjiliadis2009one}. We adopt the same setting as in \cite{hadjiliadis2009one}, where the distribution of the data stream at each sensor changes from $\mathcal N(0,\sigma^2)$ to $\mathcal N(\mu, \sigma^2)$ asynchronously. Indeed this is a special case of the model \eqref{eq:model} by letting the signal $s(t) = 1$, and the amplitude $\alpha_i = \mu $ for all sensors. In the numerical experiments, the noise level $\sigma^2 = 1$, the number of sensors $k=50$, the maximal relative delay $\tau_{\max}$ is set to $20$, and the window size $w$ is set to $20$.

\begin{figure}[h]
\vskip-0.3cm
\centerline{
\includegraphics[width = 0.23\textwidth]{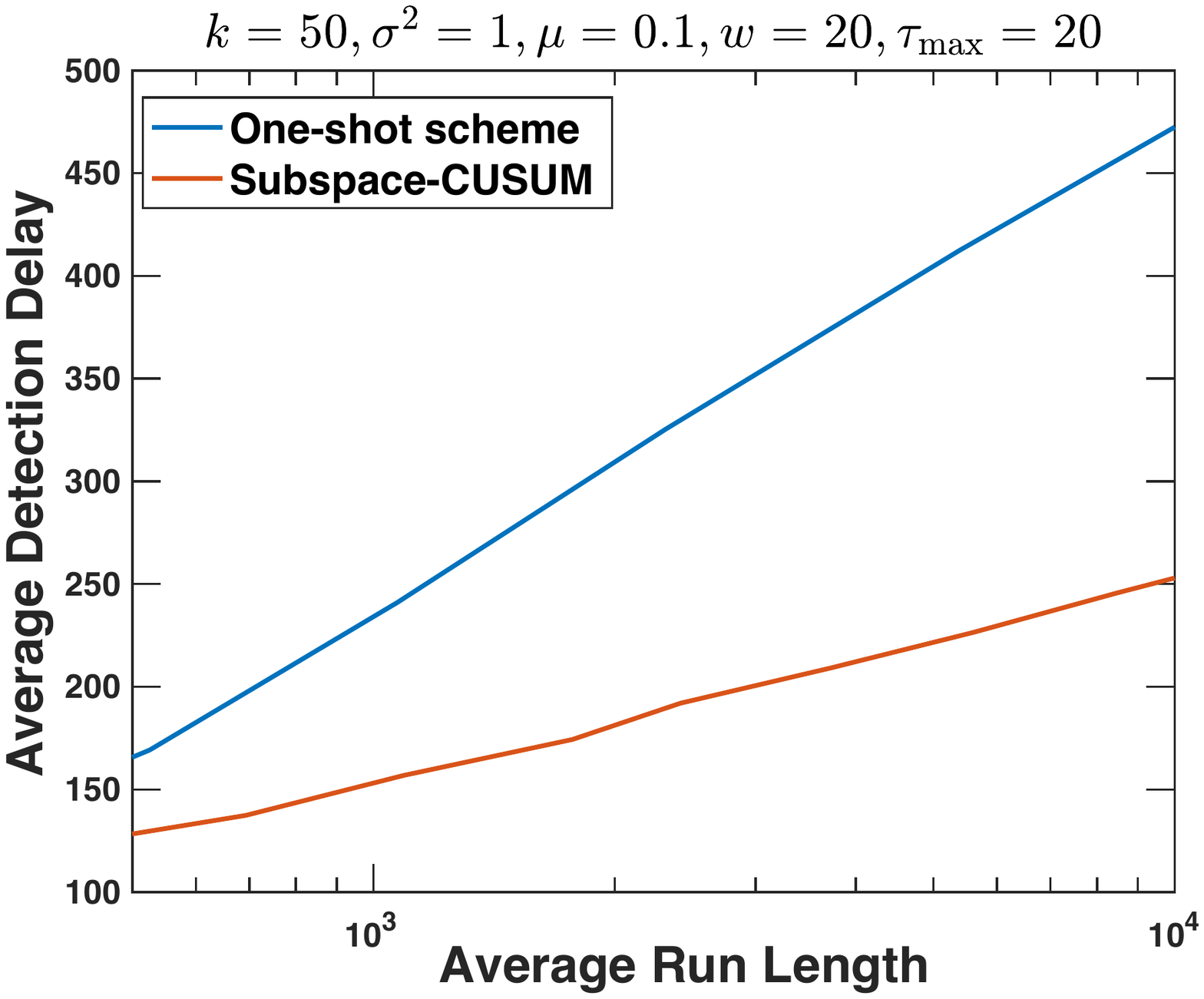} \hfill \includegraphics[width = 0.23\textwidth]{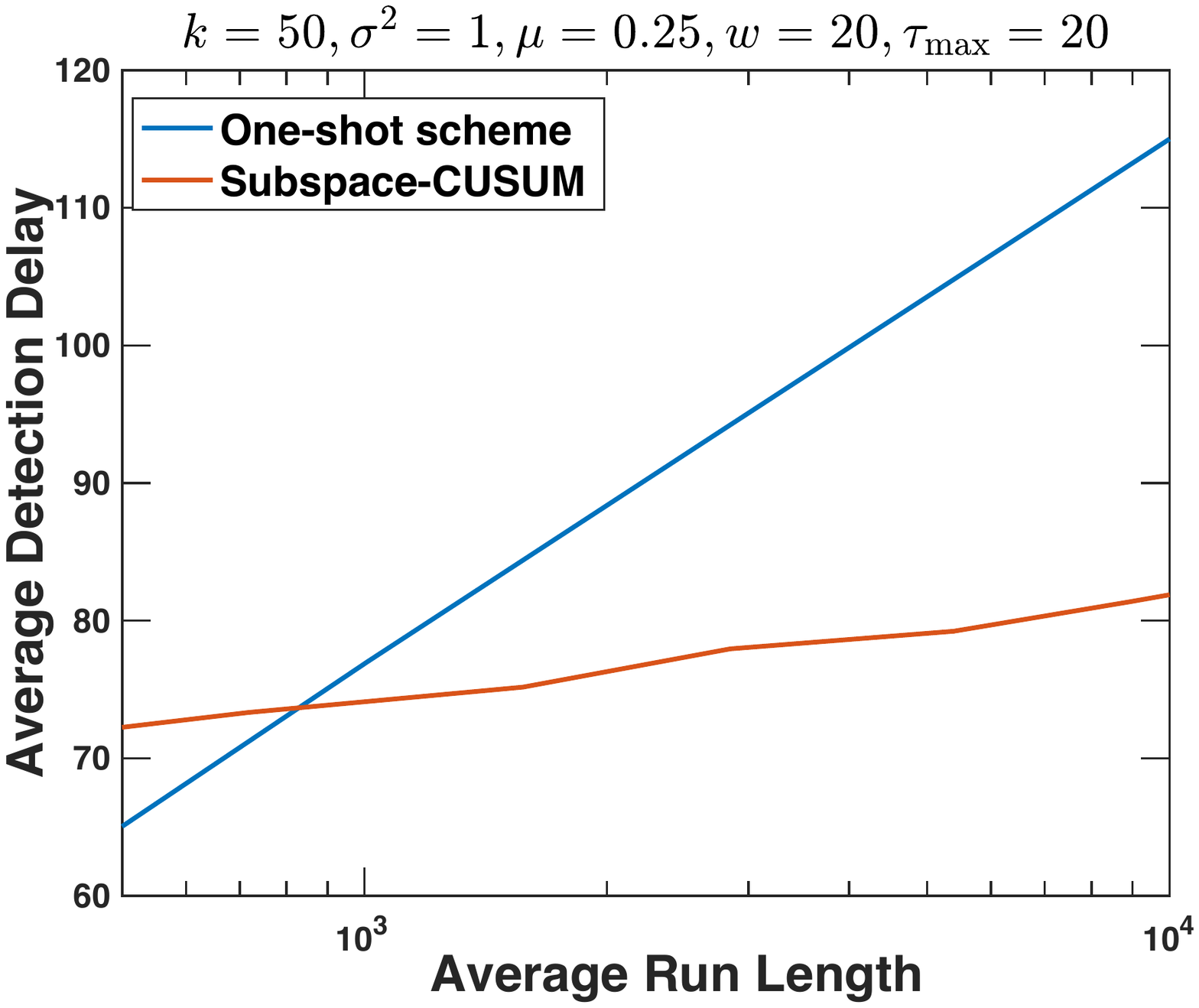}}
\vskip-0.2cm
\centerline{\footnotesize\hfil\hskip2.3cm (a)\hfill (b)\hskip1.7cm}
\vskip-0.2cm
\caption{Comparison of Average detection delay as a function of Average run length for Subspace-CUSUM and One-shot detection scheme.}
\vskip-0.2cm
\label{fig:compare}
\end{figure}
In Fig. \ref{fig:compare} we compare the average detection delay of the one-shot scheme (blue line) and our subspace-based test (red line) as a function of the average run length, which is the average period between false alarms. Fig. \ref{fig:compare} shows the results for (a) $\mu=0.1$ and (b) $\mu=0.25$, respectively. Our method can detect the weak signal with much smaller delay compared with the one-shot scheme. When $\mu$ is larger and since the one-shot scheme knows exactly $\mu$, for small average run length values it can outperform our method which does not know $\mu$ and the relative amplitudes $\alpha_i$ and estimates them. However the proposed method performs better when the average run length is larger. This suggests that combing multi-sensor observations can improve performance significantly.

\subsection{Seismic Data}

In this example, we consider the seismic tremor signal detection problem. When there is a tremor signal,  different seismic sensors will observe the same waveform with unknown and different delays. The tremor signals are useful for geophysical study and prediction of potential earthquakes. Usually, the tremor signals are very weak to detect using data at any individual sensor; therefore, network detection methods have been developed which essentially use covariance information of the data \cite{LiPeng2018}. This network-based detection problem can also be solved by our Subspace-CUSUM scheme discussed in \ref{sec:unknown}.

The seismic dataset we use is the records at Parkfield, California from 2am to 4am on 12/23/2004, which consist of 13 seismic sensors that simultaneously record a continuous stream of signals. The sampling frequency of the raw data is 250Hz. In the data preprocessing step, we normalize the observations at each sensor by subtracting the average value and dividing the maximal absolute value. The raw data after preprocessing is shown in Fig. \ref{fig:rawdata}. 

\begin{figure}[h]
\vskip-0.3cm
\centerline{
\includegraphics[width = 0.30\textwidth]{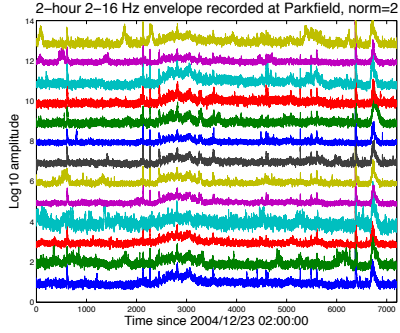}}
\vskip-0.2cm
\caption{Raw data from $13$ sensors.}
\vskip-0.2cm
\label{fig:rawdata}
\end{figure}

From the published catalog (Northern California Earthquake Data Center\footnote{\url{http://www.ncedc.org/ncedc/catalog-search.html}}), we see three small earthquakes as shown in table \ref{tab:earthquake}. There are also many low-frequency tremor records, mainly at time 2:34 $\sim$ 2:35, 2:42 $\sim$ 2:53, 3:24, 3:26, and 3:39. 
\begin{table}[h]
\vskip-0.3cm
\caption{Earthquake catalog at Parkfield during 2004/12/23 02-04 UT.}
\vskip-0.2cm
\centerline{
\begin{tabular}{c|c|c|c|c|c}
\hline
Date   &    Time      &        Lat    &    Lon   & Mag  &  Event ID \\
\hline
\!\!2004/12/23\! & 02:09:54.01 & 35.4593  & -120.7500  & 1.47    &   21429343\! \\
\!\!2004/12/23\! & 02:35:23.70  & 36.0368 & -120.6088 & 1.10  &   30229299\! \\
\!\!2004/12/23\! & 03:46:09.23  & 35.9290 & -120.4797  & 1.47   &    21429365\! \\
\hline
\end{tabular}}
\label{tab:earthquake}
\vskip-0.2cm
\end{table}
In this example, we assume that the maximum delay $\tau_{\max}=100$ namely $0.4$ seconds. The window size $w=200$, which corresponds to $0.8$\,sec. We use the data within the first $500$ sec (pre-change period) to find the drift parameter $d$ numerically, which is $1.5$ times the mean value of $(\hat{u}_t^\intercal \bm{x_t})^2$ in the first $500$ sec. We computed the Subspace-CUSUM statistic in \eqref{eq:detect_stat} which is shown in Fig. \ref{fig:seismic}. It can be seen that the three main peaks are at $603.6$\,sec, $2127.0$\,sec, $6370.0$\,sec respectively, which is quite close to the true earthquake time (recall that these times $594.0$\,sec, $2123.7$\,sec, and $6369.2$\,sec). There are some small and continuous peaks within the time period 2500\,sec $\sim$ 3200\,sec, which match the tremor catalog of 2:42 $\sim$ 2:53.
\begin{figure}[h]
\vskip-0.3cm
\centerline{\includegraphics[width = 0.24\textwidth]{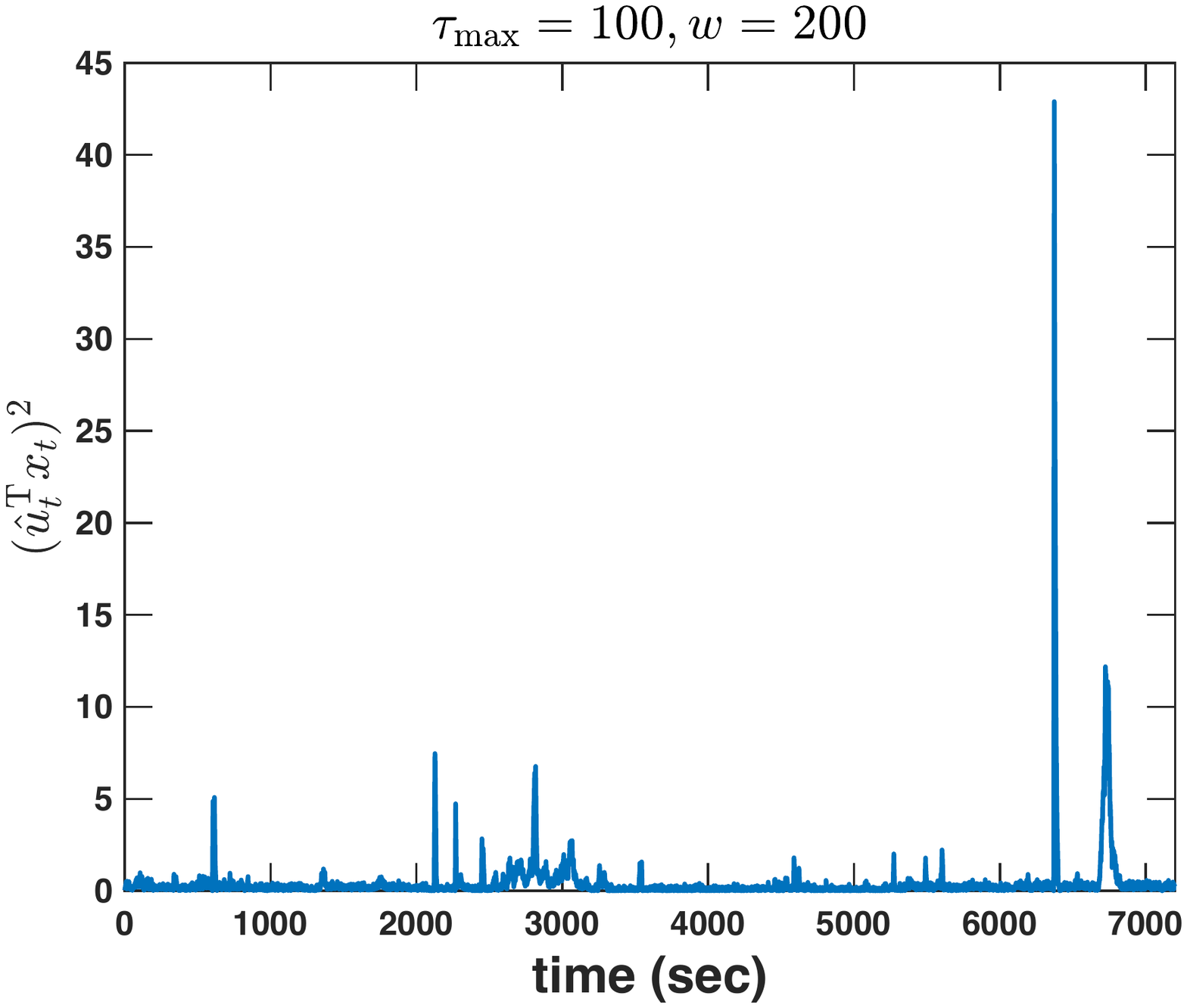} \hfill \includegraphics[width = 0.24\textwidth]{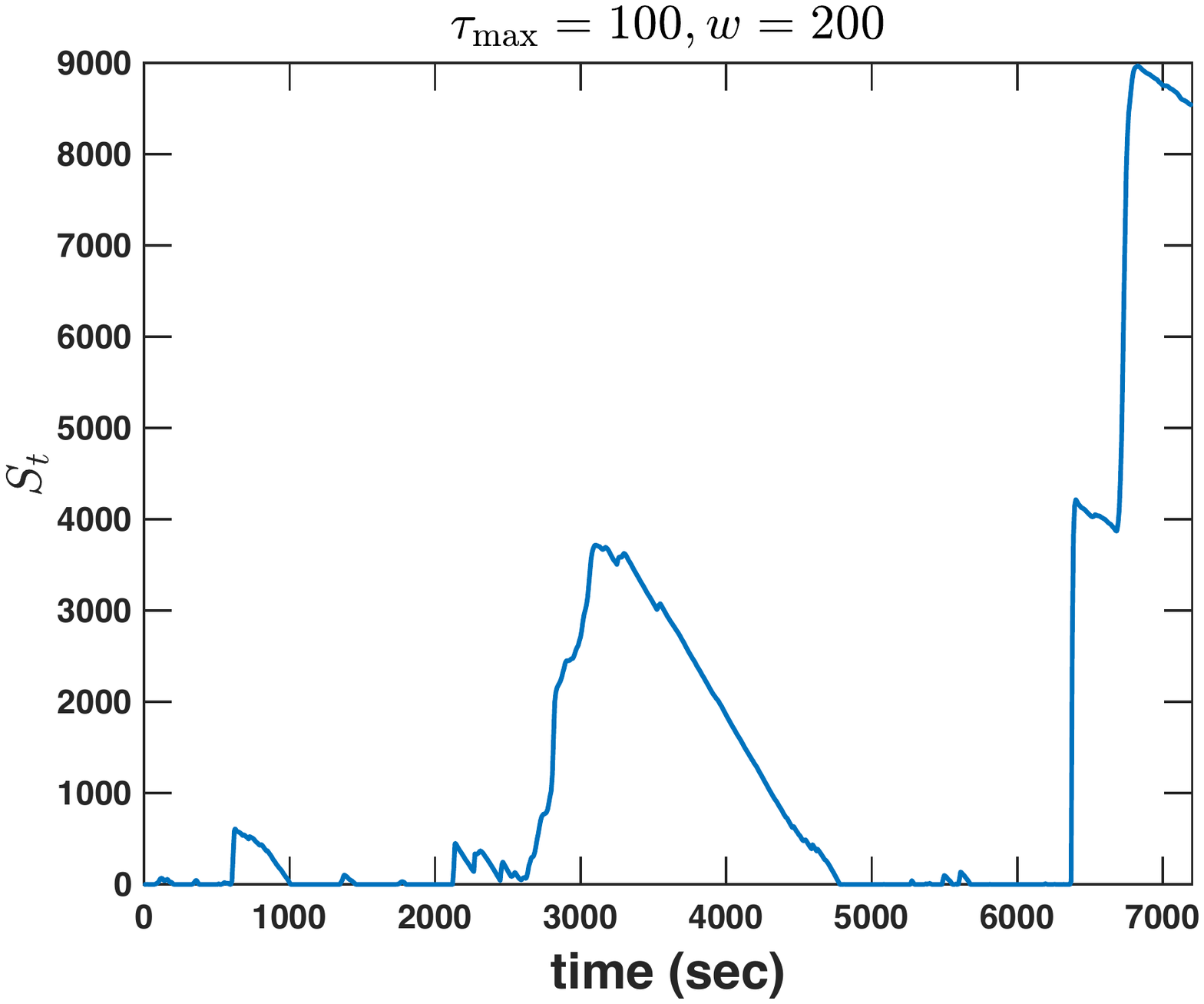} }
\vskip-0.2cm
\caption{Left: the increment $(\hat{u}_t^\intercal {\tilde x_t})^2$. Right: the Subspace-CUSUM detection statistics $\mathcal S_t$ over time.}
\vskip-0.2cm
\label{fig:seismic}
\end{figure}

\section{Conclusion}
We study the sequential change-point detection for asynchronous multi-sensors. We propose an asynchronous Subspace-CUSUM procedure based on delay estimation. Multi-sensor observations are synchronized and combined together to perform the change-point detection. We derive the optimal drift parameter for the proposed procedure, and characterize the relationship between the expected detection delay, average run length, and the energy of the time-varying signal. The good performance of the proposed procedure is presented using simulated signals and real seismic data. We also demonstrate that the proposed procedure outperforms the one-shot procedure in detecting weak and asynchronous signals.

\section*{Acknowledgment}
The work of George Moustakides was supported by the US National Science Foundation under Grant CIF\,1513373, through Rutgers University. The work of Liyan Xie and Yao Xie was supported by the US National Science Foundation under Grant CCF\,1442635, CMMI\,1538746, and a career award CCF\,1650913. The authors would like to thank Dr. Zhigang Peng at Georgia Institute of Technology on his helpful suggestions for seismic data processing.

\appendix
\section{Appendix}

\begin{proof}[Proof of Lemma \ref{lem:drift}]~
We follow the proof of Lemma 1 in \cite{xie2018first}, the only difference being that the SNR is now time varying. Based on Assumption \ref{ass:average_energy}, as $w \rightarrow \infty$, the average SNR over time equals to $\rho$. 

Similar to \cite{xie2018first}, let $\omega_t$ denotes the \textit{un-normalized} eigenvector and $v_t = \omega_t - u$ denotes the estimation error. Using Central Limit Theorem arguments \cite{anderson1963asymptotic, paul2007asymptotics}, we end up with
\begin{align*}
&\mathbb{E}_0[(\hat{u}_t^\intercal \bm x_t)^2] 
=\sigma^2+s^2(t)\|\alpha\|^2\mathbb{E}_0[(\hat{u}_t^\intercal u)^2]\\
&=\sigma^2+s^2(t)\|\alpha\|^2\mathbb{E}_0\left[\frac{1}{1+\|v_t\|^2}\right]\\
&\approx \sigma^2+s^2(t)\|\alpha\|^2\mathbb{E}_0[1-\|v_t\|^2] \\
&= \sigma^2 + s^2(t)\|\alpha\|^2\left( 1 -  \frac{1+\rho}{w\rho^2}(k-1)  \right).
\end{align*}
For the approximate equality we used the fact that to a first order approximation we can write $1/(1+\|v_t\|^2)\approx 1-\|v_t\|^2$ because $\|v_t\|^2$ is of the order of $1/w$ while the approximation error is of higher order. This completes the proof.
\end{proof}


\begin{thebibliography}{10}
\providecommand{\url}[1]{#1}
\csname url@samestyle\endcsname
\providecommand{\newblock}{\relax}
\providecommand{\bibinfo}[2]{#2}
\providecommand{\BIBentrySTDinterwordspacing}{\spaceskip=0pt\relax}
\providecommand{\BIBentryALTinterwordstretchfactor}{4}
\providecommand{\BIBentryALTinterwordspacing}{\spaceskip=\fontdimen2\font plus
\BIBentryALTinterwordstretchfactor\fontdimen3\font minus
  \fontdimen4\font\relax}
\providecommand{\BIBforeignlanguage}[2]{{%
\expandafter\ifx\csname l@#1\endcsname\relax
\typeout{** WARNING: IEEEtran.bst: No hyphenation pattern has been}%
\typeout{** loaded for the language `#1'. Using the pattern for}%
\typeout{** the default language instead.}%
\else
\language=\csname l@#1\endcsname
\fi
#2}}
\providecommand{\BIBdecl}{\relax}
\BIBdecl

\bibitem{lin2008surface}
F.-C. Lin, M.~P. Moschetti, and M.~H. Ritzwoller, ``Surface wave tomography of
  the western united states from ambient seismic noise: Rayleigh and love wave
  phase velocity maps,'' \emph{Geophysical Journal International}, vol. 173,
  no.~1, pp. 281--298, 2008.

\bibitem{xie2018first}
L.~Xie, G.~V. Moustakides, and Y.~Xie, ``First-order optimal sequential
  subspace change-point detection,'' \emph{arXiv preprint arXiv:1806.10760},
  2018.

\bibitem{hadjiliadis2009one}
O.~Hadjiliadis, H.~Zhang, and H.~V. Poor, ``One shot schemes for decentralized
  quickest change detection,'' \emph{IEEE Transactions on Information Theory},
  vol.~55, no.~7, pp. 3346--3359, 2009.

\bibitem{page1954continuous}
E.~S. Page, ``Continuous inspection schemes,'' \emph{Biometrika}, vol.~41, no.
  1/2, pp. 100--115, 1954.

\bibitem{moustakides1986CUSUM}
G.~V.~Moustakides, ``Optimal stopping times for detecting changes in distributions,'' \emph{Annals of Statistics}, vol. 14, no. 4, pp. 1379--1387, Dec. 1986.

\bibitem{tartakovsky2014sequential}
A.~Tartakovsky, I.~Nikiforov, and M.~Basseville, \emph{Sequential analysis:
  Hypothesis testing and changepoint detection}.\hskip 1em plus 0.5em minus
  0.4em\relax Chapman and Hall/CRC, 2014.


\bibitem{LiPeng2018}
Z.~Li, Z.~Peng, D.~Hollis, L.~Zhu, and J.~McClellan, ``High-resolution seismic
  event detection using local similarity for large-$n$ arrays,''
  \emph{Scientific Report}, vol.~8, no. 1646, 2018.

\bibitem{anderson1963asymptotic}
T.~W. Anderson, ``Asymptotic theory for principal component analysis,''
  \emph{The Annals of Mathematical Statistics}, vol.~34, no.~1, pp. 122--148,
  1963.

\bibitem{paul2007asymptotics}
D.~Paul, ``Asymptotics of sample eigenstructure for a large dimensional spiked
  covariance model,'' \emph{Statistica Sinica}, pp. 1617--1642, 2007.
  
\bibitem{xie2018sequential}
L.~Xie, Y.~Xie, and G.~V. Moustakides, ``Sequential subspace changepoint detection,'' \emph{arXiv preprint arXiv:1811.03936}, 2018.

\end{thebibliography}
\end{document}